\documentclass[conference]{IEEEtran}
\IEEEoverridecommandlockouts

\ifCLASSINFOpdf
 
\else
 
\fi
\usepackage{comment}
\usepackage{cite}
\usepackage{amsmath,amssymb,amsfonts}
\usepackage{bm}
\usepackage{overpic}
\usepackage{algorithm}
\usepackage{algpseudocode}%
\usepackage{graphicx}
\usepackage{textcomp}
\usepackage{xcolor}
\usepackage{float}
\usepackage{amsthm}
\usepackage{graphicx}
\usepackage{epstopdf}
\usepackage{amsmath,bm}
\usepackage{amsfonts}
\usepackage{amssymb}
\usepackage{color}
\usepackage{multirow}
\usepackage{multicol}
\usepackage{soul,xcolor}
\usepackage{setspace}
\newtheorem{lemma}{Lemma}

\DeclareMathOperator{\tr}{tr}

\hyphenation{op-tical net-works semi-conduc-tor}

\newcommand{\mathacr}[1]{\mathsf{#1}}

\newcommand{\vect}[1]{\mathbf{#1}}

\newcommand{\condSum}[3]{\overset{#3}{\underset{\underset{#2}{#1}}{\sum}}}

\def\diag{\mathrm{diag}}

\def\tr{\mathrm{tr}}

\def\Htran{\mbox{\tiny $\mathrm{H}$}}
\def\Ttran{\mbox{\tiny $\mathrm{T}$}}
\def\imagunit{\mathsf{j}} 

\begin{document}
\makeatletter
\newcommand*{\rom}[1]{\expandafter\@slowromancap\romannumeral #1@}
\makeatother

\title{User-Centric Cell-Free Massive MIMO With RIS-Integrated Antenna Arrays 
\thanks{ The work by \"O. T. Demir was supported by 2232-B International Fellowship for Early Stage Researchers Programme funded by the Scientific and Technological Research Council of T\"urkiye. E.~Bj\"ornson was supported by the  FFL18-0277 grant from the Swedish Foundation for Strategic Research.}
}

\author{\IEEEauthorblockN{\"Ozlem Tu\u{g}fe Demir\IEEEauthorrefmark{1} and
		Emil Bj\"ornson\IEEEauthorrefmark{2}}
	\IEEEauthorblockA{\IEEEauthorrefmark{1}Department of Electrical-Electronics Engineering, TOBB University of Economics and Technology, Ankara, T\"urkiye }
	\IEEEauthorblockA{\IEEEauthorrefmark{2}Department of Computer Science, KTH Royal Institute of Technology, Stockholm, Sweden  } 
 \IEEEauthorblockA{Email: ozlemtugfedemir@etu.edu.tr, emilbjo@kth.se
		}}

\maketitle

\setstretch{1.05}

\begin{abstract}
Cell-free massive MIMO (multiple-input multiple-output) is a promising network architecture for beyond 5G systems, which can particularly offer more uniform data rates across the coverage area. 
Recent works have shown how reconfigurable intelligent surfaces (RISs) can be used as relays in cell-free massive MIMO networks to improve data rates further. 
In this paper, we analyze an alternative architecture where an RIS is integrated into the antenna array at each access point and acts as an intelligent transmitting surface to expand the aperture area. This approach alleviates the multiplicative fading effect that normally makes RIS-aided systems inefficient and offers a cost-effective alternative to building large antenna arrays. We use a small number of antennas and a larger number of controllable RIS elements to match the performance of an antenna array whose size matches that of the RIS.
In this paper, we explore this innovative transceiver architecture in the uplink of a cell-free massive MIMO system for the first time, demonstrating its potential benefits through analytic and numerical contributions. The simulation results validate the effectiveness of our proposed phase-shift configuration and highlight scenarios where the proposed architecture significantly enhances data rates. 
\end{abstract}
\begin{IEEEkeywords}
	Cell-free massive MIMO, RIS-integrated antenna array, dynamic cooperation clustering, RIS configuration.%
\end{IEEEkeywords}

\section{Introduction}
Cell-free massive MIMO (multiple-input multiple-output) has established itself as a promising infrastructure for beyond 5G networks, particularly for addressing cell-edge data rate issues and providing almost uniformly data rates \cite{Ngo2018a,demir2021foundations,Buzzi2017a}. Typically, the number of antennas per access point (AP) in cell-free massive MIMO is significantly smaller than in co-located massive MIMO systems, aiming for low-cost APs that can be densely deployed. One method to enhance performance in these systems involves the use of low-cost passive reconfigurable intelligent surfaces (RISs) \cite{cell-free-RIS1, cell-free-RIS2,cell-free-RIS3}, which are generally positioned between the APs and user equipments (UEs) to reflect signals—the predominant mode of RIS deployment.

Alternatively, the RISs can serve as intelligent reflecting/transmitting surfaces incorporated into active antenna transceivers \cite{9310290}. This concept is explored in the recent studies \cite{HuangHybrid,RIS-massive-MIMO} and feature radio propagation within the transceiver. To our knowledge, this architectural concept has not yet been considered in cell-free massive MIMO systems. Note that it is different from dynamic metasurface antennas \cite{Shlezinger2024a}, which is a metamaterial-based implementation of hybrid beamforming. 

This paper examines the uplink centralized operation in cell-free massive MIMO, maintaining the typical small antenna count at each AP. However, each AP is augmented with an RIS operating in a full transmission mode during uplink. We conceptualize a system where the energy collected by each RIS is confined in an ``open-box'' structure with full reflectors bridging the active antenna array and the RIS. This configuration ensures minimal energy losses, and the combination of a large number of RIS elements with properly tuned phase-shifts facilitates data rate improvements in a cost-efficient manner using relatively few radio frequency (RF) chains. Furthermore, this innovative architecture does not require additional channel estimation, as the link between the active antenna array and the RIS is deterministic and known.

The contributions of this paper are outlined as follows:
\begin{itemize}
\item We analyze, for the first time, an RIS-integrated AP receiver in user-centric cell-free massive MIMO operation and derive the achievable uplink spectral efficiency (SE).
\item We introduce a long-term phase-shift configuration scheme designed to maximize the received signal strength at each AP. The effectiveness of this method is demonstrated through simulations, showing significantly higher data rates compared to random phase-shift selections.
\end{itemize}

\section{System Model}

In this paper, we explore a cell-free massive MIMO network comprising $L$ APs and $K$ single-antenna UEs distributed across a large coverage area. Each AP is equipped with $M$ antennas and fronted by a RIS, which includes $N$ reconfigurable elements as illustrated in Fig.~\ref{figure_RIS_integrated_AP}. This configuration is designed to harvest more uplink signal energy than conventional setups where only the $M$ active antennas are used. To this end, we postulate that $N \gg M$. It is assumed that the channel between the $M$ antennas of each AP $l$ and each UE $k$ is blocked by the front RIS. We focus on the uplink operation, where each RIS operates solely in transmission mode, thereby not reflecting any incoming energy. Consequently, the uplink signals reach the RIS and are transmitted through it towards the AP antennas, with each RIS element imparting a phase-shift during the process. We further assume that there is no energy loss in the signal transmission through the RIS; thus, the total received power is determined by the RIS area, not the antenna array area. This can be facilitated by encasing the region between the AP antennas and the RISs with full reflectors on the sides, thereby minimizing energy leakage. This assumption establishes a theoretical benchmark for system performance analysis.

\begin{figure} 
	\centering 
	\begin{overpic}[width=\columnwidth,tics=10]{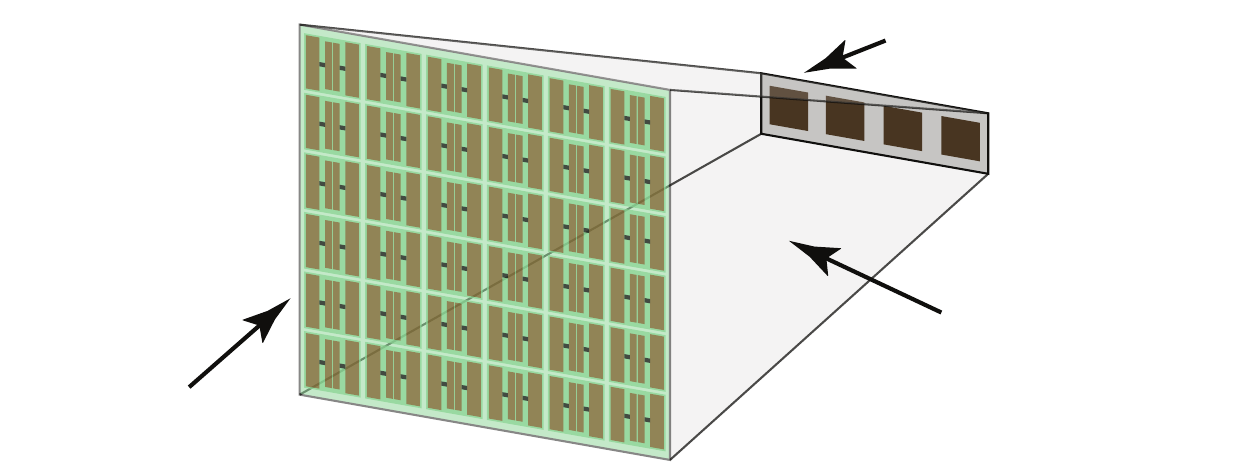}
		\put(0,2.5){\small Transmissive RIS,}
		\put(0,-1.5){\small $N$ elements}
		\put(72,34){\small $M$ active antennas}
		\put(77,10){\small Enclosed box}
	\end{overpic}  
	\caption{Illustration of an RIS-integrated antenna array.}
	\label{figure_RIS_integrated_AP} \vspace{-4mm}
\end{figure}

The channel between AP $l$ and its corresponding front RIS is assumed to be a Ricean fading channel consisting of a line-of-sight (LOS) radiative near-field channel plus the spatially correlated non-line-of-sight (NLOS) part corresponding to the reflections from the sides of the reflectors from the RIS to the AP. The channel matrix is denoted by $\vect{H}_l$. The power of the NLOS component is adjusted such that the norm of each column in $\vect{H}_l$ is equal to one. This adjustment ensures that the energy collected by each RIS element is distributed across all AP antennas. $\vect{H}_l$ is a constant channel due to non-varying propagation conditions between each AP and its front RIS. The LOS part of this channel can be expressed as 
\begin{align}
   \begin{bmatrix}
   \frac{\lambda}{4\pi d_{l,1,1}} e^{-\imagunit \frac{2\pi d_{l,1,1}}{\lambda}} & \ldots & \frac{\lambda}{4\pi d_{l,1,N}} e^{-\imagunit \frac{2\pi d_{l,1,N}} {\lambda}} \\
    \vdots & \ddots & \vdots \\
   \frac{\lambda}{4\pi d_{l,M,1}} e^{-\imagunit \frac{2\pi d_{l,M,1}}{\lambda}} & \ldots & \frac{\lambda}{4\pi d_{l,M,N}}e^{-\imagunit \frac{2\pi d_{l,M,N}} {\lambda}} 
    \end{bmatrix},
\end{align}
where $d_{l,m,n}$ is the propagation distance between the $m$-th antenna of AP $l$ and the $n$-th element of the RIS in front of AP $l$, and $\lambda$ is the wavelength. The AP antennas are assumed to be isotropic, but it is straightforward to consider other antennas.

The communication link between UE $k$ and RIS $l$ is described by the channel vector $\vect{h}_{kl} \in \mathbb{C}^{N}$. In this context, we apply the well-established block fading model \cite{massivemimobook}, where $\vect{h}_{kl}$ is fixed for time-frequency blocks spanning $\tau_c$ channel uses and experiences independent changes in each coherence block, following correlated Rayleigh fading: $\vect{h}_{kl} \sim \mathcal{N}_{\mathbb{C}}(\vect{0}_N,\vect{R}_{kl})$. The matrix $\vect{R}_{kl}$ characterizes the spatial correlation of the channel between RIS $l$ and UE $k$, and also accounts for the effects of large-scale fading. We assume that these correlation matrices are accessible wherever required because they are determined by the channels' long-term statistics and remain constant during transmission.

Within each coherence block, there are $\tau_p$ channel uses allocated for uplink pilot signals and $\tau_c-\tau_p$ channel uses for transmitting uplink payload data. We adopt a centralized cell-free massive MIMO operation mode \cite{demir2021foundations}, where the received signals are directly forwarded to the central processing unit (CPU). At the CPU, these signals undergo linear processing for the purpose of joint signal detection.

\subsection{Dynamic Cooperation Clustering}
In this study, we consider a user-centric cell-free network by employing the dynamic cooperation clustering (DCC) framework, initially introduced in \cite{Bjornson2013d}. Following the notation from \cite{demir2021foundations}, we use $\mathcal{M}_k \subset \{1, \ldots, L\}$ to represent the set of APs that serve UE $k$. These sets are determined by the CPU for a specific configuration and remain constant during communication. We define the DCC matrices based on the sets $\{\mathcal{M}_k; k=1,\ldots,K\}$ as 
\begin{equation}\label{eq:DCC}
\vect{D}_{kl}= \begin{cases}
\vect{I}_M, & \textrm{if } l \in \mathcal{M}_k, \\ 
\vect{0}_{M \times M}, & \textrm{if } l \not \in \mathcal{M}_k. \end{cases}
\end{equation}
Additionally, we specify $\mathcal{D}_l$ as the set containing the indices of UEs served by AP $l$:
\begin{equation}
\mathcal{D}_l = \bigg\{ k : \tr(\vect{D}_{kl})\geq 1, k \in \{ 1, \ldots, K \} \bigg\}. \label{eq:D-l}
\end{equation}

Given that the RIS is a frequency-flat device, it is essential to maintain a consistent configuration throughout each coherence block during data transmission. Furthermore, this configuration should be uniformly applied across all channel uses throughout the frequency spectrum. To achieve this, we propose adopting a long-term RIS phase-shift configuration, strategically selected to maximize signal strength based on long-term channel statistics. The proposed design for the RIS configuration will be outlined in the subsequent section, and numerical results will demonstrate its superiority over a random selection of phase-shifts.

\subsection{Channel Estimation}

In each coherence block, the channels from UE $k$ to AP $l$ for $l\in \mathcal{M}_k$ must be estimated. In large cell-free networks, it is common for the number of orthogonal pilot sequences, $\tau_p$, to be less than the number of UEs, $K$, due to limited pilot resources. Therefore, UEs are allocated pilots from a set of $\tau_p$ mutually orthogonal sequences, resulting in some UEs sharing the same pilot sequence. We let $\bm{\varphi}_k \in \mathbb{C}^{\tau_p}$ represent the pilot sequence used by UE $k$, with a norm square of $\Vert\bm{\varphi}_{k}\Vert^2=\tau_p$, and let $\mathcal{P}_k$ be the set of UEs that use the same pilot sequence as UE $k$ (including UE $k$ itself). The phase-shift configuration of RIS $l$ during pilot transmission is represented by the diagonal matrix $\vect{\Psi}_l = \diag(e^{-\imagunit\psi_{l,1}},\ldots, e^{-\imagunit\psi_{l,N}})$. Consequently, the pilot signal received at AP $l$, denoted as ${\bf Z}_l\in \mathbb{C}^{M\times \tau_p}$, is
\begin{equation}\label{eq:pilot}
 {\bf Z}_{l}=\sum_{k=1}^{K}\sqrt{\rho_p}\vect{H}_l\vect{\Psi}_l{\bf h}_{kl}\bm{\varphi}_{k}^{\Ttran}+{\bf N}_l, \vspace{-2mm}
\end{equation}
where $\rho_p$ is the pilot transmit power and the elements of noise matrix ${\bf N}_{l}\in \mathbb{C}^{M \times \tau_p}$ are independent and identically distributed (i.i.d.) $\mathcal{N}_{\mathbb{C}}(0,\sigma^2)$ random variables and $\sigma^2$ is the receiver noise variance. After correlating $ {\bf Z}_{l}$ with the pilot sequence $\bm{\varphi}_{k}$, a sufficient statistics for estimating the channel of UE $k$ is obtained as
\begin{equation}\label{eq:suff-stats} 
 {\bf z}_{kl}=\frac{{\bf Z}_l\bm{\varphi}_k^{*}}{\sqrt{\tau_p}}=\sqrt{\tau_p\rho_p}\sum_{i \in \mathcal{P}_k}\vect{H}_l\vect{\Psi}_l{\bf h}_{il}+{\bf n}_{kl},
\end{equation}
where ${\bf n}_{kl}={\bf N}_l\bm{\varphi}_k^{*}/\sqrt{\tau_p}$ has i.i.d. $\mathcal{N}_{\mathbb{C}}(0,\sigma^2)$ elements.
The minimum mean-squared error (MMSE) estimate  of ${\bf h}_{kl}$ is
\begin{align}\label{eq:lmmse} 
{\bf \widehat{h}}_{kl}&=\sqrt{\tau_p\rho_p}{\bf R}_{kl}\vect{\Psi}_l^{\Htran}\vect{H}_l^{\Htran}\nonumber\\
&\quad \cdot\left(\tau_p\rho_p\sum_{i \in \mathcal{P}_k}\vect{H}_l\vect{\Psi}_l{\bf R}_{il}\vect{\Psi}_l^{\Htran}\vect{H}_l^{\Htran}+\sigma^2{\bf I}_M\right)^{-1}{\bf z}_{kl}.
\end{align}
Using the independence of the channel estimation error and channel estimate, the error correlation matrix of the channel ${\vect{h}}_{kl}$ can be obtained as
\begin{align}
 &   \vect{C}_{kl} = \vect{R}_{kl}-\tau_p\rho_p \vect{R}_{kl}\vect{\Psi}_l^{\Htran}\vect{H}_l^{\Htran}  \nonumber \\
&\cdot\left(\tau_p\rho_p\sum_{i \in \mathcal{P}_k}\vect{H}_l\vect{\Psi}_l{\bf R}_{il}\vect{\Psi}_l^{\Htran}\vect{H}_l^{\Htran}+\sigma^2{\bf I}_M\right)^{-1}\vect{H}_l\vect{\Psi}_l\vect{R}_{kl}.
\end{align}

\vspace*{-6mm}
\begin{algorithm}[h!]
	\caption{Constrained power iteration to maximize \eqref{eq:cost}.} \label{alg:power-method}
	\begin{algorithmic}[1]
		\State {\bf Initialization:} Select $\boldsymbol{\psi}_{l}^{(0)} = [1,\ldots,1]^{\Ttran}$ and the number of iterations $I$
		\For{$i=0,\ldots,I-1$} 
		\State $\vect{w}^{(i+1)} \gets \frac{\vect{A}_l\boldsymbol{\psi}_l^{(i)}}{\| \vect{A}_l\boldsymbol{\psi}_l^{(i)}\|}$ 
		\State $\boldsymbol{\psi}^{(i+1)} = \left [e^{\imagunit \arg\left([\vect{w}^{(i+1)}]_{1}\right)},\ldots,e^{\imagunit \arg\left([\vect{w}^{(i+1)}]_{N}\right)}\right]^{\Ttran}$
		\EndFor
		\State {\bf Output:} $\boldsymbol{\psi}_l^{(I)}$
	\end{algorithmic}
\end{algorithm}

Our primary focus is on maximizing the overall received signal strength, and we will demonstrate numerically that this simple strategy is very effective. Accordingly, the objective function for AP $l$ is formulated to optimize this metric as
\begin{align}
&\sum_{k\in \mathcal{D}_l}\mathbb{E}\left\{\left\Vert\sqrt{\tau_p\rho_p}\vect{H}_l\vect{\Psi}_l\vect{h}_{kl} \right\Vert^2\right\}\nonumber\\
&= \tau_p\rho_p\tr\bigg(\vect{H}_l\vect{\Psi}_l\underbrace{\sum_{k\in \mathcal{D}_l}\vect{R}_{kl}}_{\triangleq \vect{B}_l}\vect{\Psi}_l^{\Htran}\vect{H}_l^{\Htran}\bigg). \label{eq:cost}
\end{align} 
Defining $\boldsymbol{\psi}_l=\left[e^{-\imagunit\psi_{l,1}},\ldots, e^{-\imagunit\psi_{l,N}}\right]^{\Ttran}$ and denoting the eigen-decomposition of $\vect{B}_l$ as $\vect{B}_l=\sum_{n=1}^N\lambda_{l,n}\vect{u}_{l,n}\vect{u}_{l,n}^{\Htran}$ (with $\lambda_{l,n}$ and $\vect{u}_{l,n}$ being the eigenvalues and corresponding eigenvectors), we can write the objective function in \eqref{eq:cost} as
\begin{align}
\tau_p\rho_p\boldsymbol{\psi}_l^{\Htran}\underbrace{\sum_{n=1}^N \lambda_{l,n}\diag(\vect{u}_{l,n}^*)\vect{H}_l^{\Htran}\vect{H}_l\diag(\vect{u}_{l,n})}_{\triangleq \vect{A}_l}\boldsymbol{\psi}_l
\end{align}
The total signal strength maximization problem, subject to unit modulus constraints on the entries of $\boldsymbol{\psi}_l$ for RIS connected to AP $l$, can be solved using the power iteration method  \cite{bjornson2024introduction}. The steps of this iterative procedure are outlined in Algorithm~\ref{alg:power-method}.
Later, we will utilize the output of Algorithm~\ref{alg:power-method} in selecting long-term RIS phase-shift configuration.

\section{Spectral Efficiency}

 In the uplink data phase, the received signal at AP $l$ is
 \begin{equation} \label{eq:received_AP}
 {\bf r}_l=\sum_{k=1}^K\vect{H}_l\vect{\Psi}_l{\bf h}_{kl}s_k+{\bf n}_l,
 \end{equation}
 where $s_k\in \mathbb{C}$ is the zero-mean information signal of UE $k$ with $\mathbb{E}\left\{|s_k|^2\right\}=\eta_k$, and $\eta_k>0$ is the transmit power. The receiver noise is ${\bf n}_l \sim \mathcal{N}_{\mathbb{C}}\left({\bf 0}_M,\sigma^2{\bf I}_M\right)$. We let ${\bf v}_{kl} \in \mathbb{C}^{M}$ denote the effective receive combining vector for the signal of UE $k \in \mathcal{D}_l$ at AP $l$. We define $\vect{D}_{k}  = \diag( \vect{D}_{k1} , \ldots, \vect{D}_{kL} )$ is a block-diagonal matrix with $\vect{D}_{kl}$ being defined in~\eqref{eq:DCC} as $\vect{I}_M$ for APs that serve UE $k$ and zero otherwise.

The received uplink data signals at the serving APs are jointly used at the CPU to compute an estimate $\widehat{s}_{k}$ of the signal $s_k$ transmitted by UE $k$. This is achieved by summing up the inner products between the effective receive combining vectors $\vect{D}_{kl}\vect{v}_{kl}$ and $\vect{r}_{l}$ for $l=1,\ldots,L$. This yields the estimate 
\begin{align} \notag
\widehat{s}_{k} &=  \sum\limits_{l=1}^L\widehat{s}_{kl} = \sum\limits_{l=1}^L\vect{v}_{kl}^{\Htran} \vect{D}_{kl} {\bf r}_{l} = \vect{v}_{k}^{\Htran} \vect{D}_{k} {\bf r}, \label{eq:uplink-CPU-data-estimate}
\end{align}
where $\vect{v}_{k} = [\vect{v}_{k1}^{\Ttran} \, \ldots \, \vect{v}_{kL}^{\Ttran}]^{\Ttran} \in \mathbb{C}^{LM}$ is the centralized combining vector and $\vect{r} \in \mathbb{C}^{LM}$ is the collective uplink data signal given by
\begin{equation} \label{eq:received-data-central2}
\vect{r} = \begin{bmatrix} \vect{r}_{1} \\ \vdots \\ \vect{r}_{L}
	\end{bmatrix} = \sum_{i=1}^{K} \vect{H}\vect{\Psi}\vect{h}_{i} s_i + \vect{n}
\end{equation}
with $\vect{h}_i = [\vect{h}_{i1}^{\Ttran} \, \ldots \, \vect{h}_{iL}^{\Ttran}]^{\Ttran} \in \mathbb{C}^{LN}$ and $\vect{n} = [\vect{n}_{1}^{\Ttran} \, \ldots \, \vect{n}_{L}^{\Ttran}]^{\Ttran} \in \mathbb{C}^{LM}$ being the collective channel and noise vector, respectively. Similarly, $\vect{H} = \diag(\vect{H}_1,\ldots,\vect{H}_L)$ and $\vect{\Psi}= \vect\diag(\vect{\Psi}_1,
\ldots,\vect{\Psi}_L)$. The centralized combining vector can be selected based on the knowledge of partial MMSE estimates $\{\vect{D}_{k}\vect{H}\vect{\Psi}\widehat{\vect{h}}_{i} : i= 1, \ldots,K\}$. The long-term RIS phase-shift configuration can be selected to maximize the total received signal strength by implementing Algorithm~\ref{alg:power-method}. In the following lemma, we provide an achievable SE for the considered AP architecture with centralized cell-free massive MIMO operation and a long-term RIS configuration.

\begin{lemma} \label{proposition:uplink-capacity-general}
	An achievable SE of UE $k$ in the centralized cell-free massive MIMO operation with long-term RIS phase-shift configuration is
	\begin{equation} \label{eq:uplink-rate-expression-general}
	\mathacr{SE}_{k}= \frac{\tau_c-\tau_p}{\tau_c} \mathbb{E} \left\{ \log_2  \left( 1 + \mathacr{SINR}_{k}  \right) \right\} \quad \textrm{bit/s/Hz},
	\end{equation}
	where the instantaneous effective signal-to-interference-plus-noise ratio (SINR) is given by
	\begin{equation} \label{eq:uplink-instant-SINR-level4}
	\mathacr{SINR}_{k} = \frac{ \eta_{k} \left |  \vect{v}_{k}^{\Htran} \vect{D}_{k}\vect{H}\vect{\Psi}\widehat{\vect{h}}_{k} \right|^2  }{ 
		\condSum{i=1}{i \neq k}{K} \eta_{i} \left | \vect{v}_{k}^{\Htran} \vect{D}_{k}\vect{H}\vect{\Psi}\widehat{\vect{h}}_{i} \right|^2
		+ \vect{v}_{k}^{\Htran}  {\bf{Z}}_k \vect{v}_{k}  + \sigma^2 \| \vect{D}_k \vect{v}_{k} \|^2}
	\end{equation}
	with ${\bf{Z}}_k = \sum\limits_{i=1}^{K} \eta_{i}  \vect{D}_k  \vect{H}\vect{\Psi}\vect{C}_{i} \vect{\Psi}^{\Htran}\vect{H}^{\Htran}\vect{D}_k$ and the expectation is with respect to the channel estimates. The matrix $\vect{C}_{i}$ is the error correlation matrix of the collective channel ${\vect{h}}_{i}$, which is given as $\vect{C}_{i}  = \diag( \vect{C}_{i1} , \ldots, \vect{C}_{iL} )$.
\end{lemma}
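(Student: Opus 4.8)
The plan is to adapt the standard capacity-bounding argument for centralized cell-free massive MIMO to the present architecture by regarding $\vect{D}_k\vect{H}\vect{\Psi}\vect{h}_i$ as the effective channel seen by the combiner of UE $k$. Since $\vect{H}$ and $\vect{\Psi}$ are deterministic and known, the MMSE estimate of $\vect{H}\vect{\Psi}\vect{h}_i$ is simply $\vect{H}\vect{\Psi}\widehat{\vect{h}}_i$ and the error propagates linearly as $\vect{H}\vect{\Psi}\vect{h}_i = \vect{H}\vect{\Psi}\widehat{\vect{h}}_i + \vect{H}\vect{\Psi}\widetilde{\vect{h}}_i$, where $\widetilde{\vect{h}}_i \triangleq \vect{h}_i - \widehat{\vect{h}}_i$. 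Because each $\vect{h}_i$ is Gaussian and the statistic in \eqref{eq:suff-stats} is a linear function of Gaussians plus Gaussian noise, all quantities are jointly Gaussian, so $\widetilde{\vect{h}}_i$ is \emph{independent} of the entire collection of estimates $\{\widehat{\vect{h}}_j\}_{j=1}^{K}$ and has covariance $\vect{C}_i = \diag(\vect{C}_{i1},\ldots,\vect{C}_{iL})$; consequently the effective-channel error $\vect{H}\vect{\Psi}\widetilde{\vect{h}}_i$ has covariance $\vect{H}\vect{\Psi}\vect{C}_i\vect{\Psi}^{\Htran}\vect{H}^{\Htran}$, which is block diagonal because the channels are independent across APs.

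Next I would substitute this decomposition into the soft estimate $\widehat{s}_k = \vect{v}_k^{\Htran}\vect{D}_k\vect{r}$ of \eqref{eq:uplink-CPU-data-estimate}, using $\vect{r}$ from \eqref{eq:received-data-central2}, and group the resulting terms as a desired contribution $\vect{v}_k^{\Htran}\vect{D}_k\vect{H}\vect{\Psi}\widehat{\vect{h}}_k\,s_k$, the interference from the estimated channels $\sum_{i\neq k}\vect{v}_k^{\Htran}\vect{D}_k\vect{H}\vect{\Psi}\widehat{\vect{h}}_i\,s_i$, an estimation-error term $\sum_{i=1}^{K}\vect{v}_k^{\Htran}\vect{D}_k\vect{H}\vect{\Psi}\widetilde{\vect{h}}_i\,s_i$, and the noise $\vect{v}_k^{\Htran}\vect{D}_k\vect{n}$. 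Conditioning on $\{\widehat{\vect{h}}_j\}_{j=1}^{K}$ makes $\vect{v}_k$ and the desired gain $\vect{v}_k^{\Htran}\vect{D}_k\vect{H}\vect{\Psi}\widehat{\vect{h}}_k$ deterministic. I would then use the zero mean and mutual independence of the data symbols $\{s_i\}$, the independence of $\widetilde{\vect{h}}_i$ from the estimates, and $\mathbb{E}\{\vect{n}\vect{n}^{\Htran}\}=\sigma^2\vect{I}_{LM}$ to show that each remaining term has zero conditional mean and is uncorrelated with $s_k$, and that their total conditional variance equals the denominator of \eqref{eq:uplink-instant-SINR-level4}: the interference power $\sum_{i\neq k}\eta_i|\vect{v}_k^{\Htran}\vect{D}_k\vect{H}\vect{\Psi}\widehat{\vect{h}}_i|^2$, the error power $\vect{v}_k^{\Htran}\vect{Z}_k\vect{v}_k$ with $\vect{Z}_k$ as stated, and the noise power $\sigma^2\|\vect{D}_k\vect{v}_k\|^2$.

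The final step is a worst-case uncorrelated-noise argument: conditioned on the estimates, $\widehat{s}_k = g_k s_k + \nu_k$ is a deterministic-gain channel in which $\nu_k$ is uncorrelated with $s_k$ and has the variance just computed, so assigning $s_k$ a Gaussian codebook and treating $\nu_k$ as the worst-case Gaussian noise of equal variance yields the achievable conditional rate $\log_2(1+\mathacr{SINR}_k)$. Averaging this rate over the distribution of the channel estimates gives the ergodic SE, and multiplying by the pilot-overhead prelog factor $(\tau_c-\tau_p)/\tau_c$ produces \eqref{eq:uplink-rate-expression-general}. The main obstacle is the careful bookkeeping of the conditional second-order statistics—in particular, confirming that the Gaussian MMSE error covariance $\vect{C}_i$ of $\vect{h}_i$ is mapped by the fixed matrix $\vect{H}\vect{\Psi}$ into exactly the matrix $\vect{Z}_k$, and that conditioning on the estimates leaves the error and multi-user interference terms uncorrelated with the desired symbol so that the worst-case-noise inequality is applicable.
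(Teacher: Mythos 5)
Your proof is correct and follows exactly the route the paper relies on: the paper's proof is a one-line citation to \cite[Th.~5.1]{demir2021foundations}, and your argument---treating $\vect{D}_k\vect{H}\vect{\Psi}\vect{h}_i$ as the effective channel, decomposing it into MMSE estimate plus independent Gaussian error, conditioning on the estimates, computing the conditional interference/error/noise variances, and applying the worst-case uncorrelated-additive-noise bound before averaging and adding the pilot prelog---is precisely the argument of that cited theorem adapted to the fixed, known matrix $\vect{H}\vect{\Psi}$. No gaps; this is the same approach, spelled out in full.
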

\begin{proof}
	It follows similar steps as the proof of \cite[Th.~5.1]{demir2021foundations}.
\end{proof}

The SE expression in \eqref{eq:uplink-rate-expression-general} holds for any receive combining vector $\vect{v}_{k}$, but we would like to use the one that maximizes its value.
We notice that \eqref{eq:uplink-instant-SINR-level4} has the form of a generalized Rayleigh quotient:
	\begin{align} \label{eq:uplink-instant-SINR-level4-Rayleigh}
&	\mathacr{SINR}_{k} = \nonumber\\
&\frac{ \eta_{k} \left |  \vect{v}_{k}^{\Htran} \vect{D}_{k}\vect{H}\vect{\Psi}\widehat{\vect{h}}_{k} \right|^2  }{\vect{v}_{k}^{\Htran} \left(   \condSum{i=1}{i \neq k}{K} \eta_{i}  \vect{D}_{k}\vect{H}\vect{\Psi}\widehat{\vect{h}}_{i} \widehat{\vect{h}}_{i}^{\Htran}\vect{\Psi}^{\Htran}\vect{H}^{\Htran}\vect{D}_{k} + \vect{Z}_{k}  + \sigma^2 \vect{D}_k \right) \vect{v}_{k}  
	},
	\end{align}
where we have replaced $\vect{D}_k^2$ by $\vect{D}_k$ in the last term of the denominator by using the relation $\vect{D}_k^2=\vect{D}_k$. The instantaneous SINR in \eqref{eq:uplink-instant-SINR-level4} for UE~$k$ is maximized by the MMSE combining vector
\begin{align} \label{eq:MMSE-combining}
\vect{v}_{k}^{{\rm MMSE}} &=  
\eta_{k} \Bigg( \sum\limits_{i=1}^{K} \eta_{i}  \vect{D}_{k} \vect{H}\vect{\Psi}\left( \widehat{\vect{h}}_{i} \widehat{\vect{h}}_{i}^{\Htran} + \vect{C}_{i}\right)\vect{\Psi}^{\Htran}\vect{H}^{\Htran}\vect{D}_{k} \nonumber\\&\quad+  \sigma^2 \vect{I}_{LM} \Bigg)^{\!-1}  \vect{D}_{k}\vect{H}\vect{\Psi}\widehat{\vect{h}}_{k}.
\end{align}

The results outlined above can be derived using the standard cell-free massive MIMO SE and combining formulas once the RIS configuration is fixed. The key distinction lies in the emergence of ``effective'' channels characterized by correlated Rayleigh fading, each with distinct covariance matrices.

\section{Numerical Results}\label{sec:numerical-results}

In this section, we assess the SE enhancements achieved by the proposed AP architecture with an integrated RIS, compared to conventional APs. Our analysis primarily follows the network setup detailed in \cite[Sec.~5.3]{demir2021foundations}. Unless specified otherwise, we consider a total coverage area of $1 \times 1$\,km. The APs are distributed uniformly at random within this area. Communications occur over a bandwidth of 20\,MHz, with a total receiver noise power set at $-94$\,dBm for both APs and UEs. Each UE has a maximum uplink transmit power of 100\,mW. The duration of each coherence block is $\tau_c = 200$ samples with $\tau_p=10$ samples being allocated for pilot transmission. The path losses and shadowing correlations adhere to the models described in \cite[Sec.~5.3]{demir2021foundations}. Furthermore, the joint pilot assignment and DCC algorithm described in \cite[Alg.~4.1]{demir2021foundations} is utilized to assign the pilots and define the DCC sets. The RIS at each AP is equipped with a  $6\times 6$ uniform planar array configuration with $\lambda/2$-element spacing. The P-MMSE combining vector, a scalable alternative to the MMSE combining vector, is utilized as described in \cite[Ch.~5]{demir2021foundations}. 

We will plot the cumulative distribution function (CDF) of the SE for random AP and UE locations.
We consider four different setups: i) RIS-integrated cell-free massive MIMO with $M=4$ active AP antennas in the uniform linear array configuration with $\lambda/2$ spacing. The RIS phase-shifts are optimized using the proposed Algorithm~\ref{alg:power-method}; ii) RIS-integrated cell-free massive MIMO with $M=4$ active AP antennas and random RIS phase-shifts; iii) the conventional cell-free massive MIMO (``No RIS'') with $M=4$ active AP antennas; and iv) the conventional case with $M=36$ active antennas.

\begin{figure}[t!]
	\vspace{0.1cm}
	\begin{center}
		\includegraphics[trim={0.6cm 0cm 1cm 0.6cm},clip,width=8.1cm]{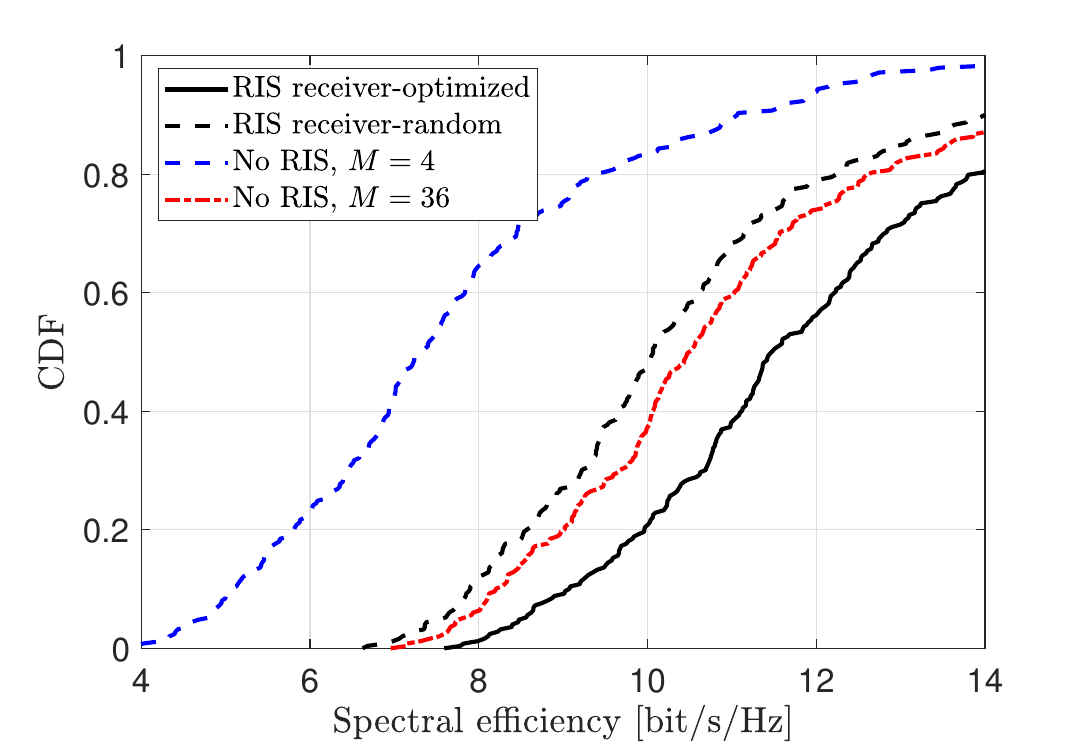}
		\vspace{-4mm}
		\caption{The CDF of the SE per UE for $L=50$ and $K=10$.} \label{fig:sim1}
	\end{center}
\vspace{-8mm}
\end{figure}

In Fig.~\ref{fig:sim1}, we begin by presenting a smaller-scale setup with $L=50$ APs and $K=10$ UEs. As illustrated, the proposed RIS-integrated cell-free architecture achieves the highest SE, surpassing even configurations with $36$ active AP antennas. This enhancement arises because, in the proposed setup with $N$ RIS elements and $M$ active array antennas, the total received signal power is proportional to $N$ and the total noise power is proportional to $M$. Therefore, when $M \ll N$, the noise power is reduced by a factor of $N/M$ (before accounting for beamforming gains). Furthermore, our proposed optimization technique yields much better SE than random phase-shifts. Most notably, compared to the configuration without RIS and with the same number of active AP antennas ($M=4$), the proposed cell-free massive MIMO setup with RIS-integrated antenna arrays achieves a 55\% improvement in median SE.

\begin{figure}[t!]
		\vspace{0.1cm}
	\begin{center}
		\includegraphics[trim={0.6cm 0cm 1cm 0.6cm},clip,width=8.1cm]{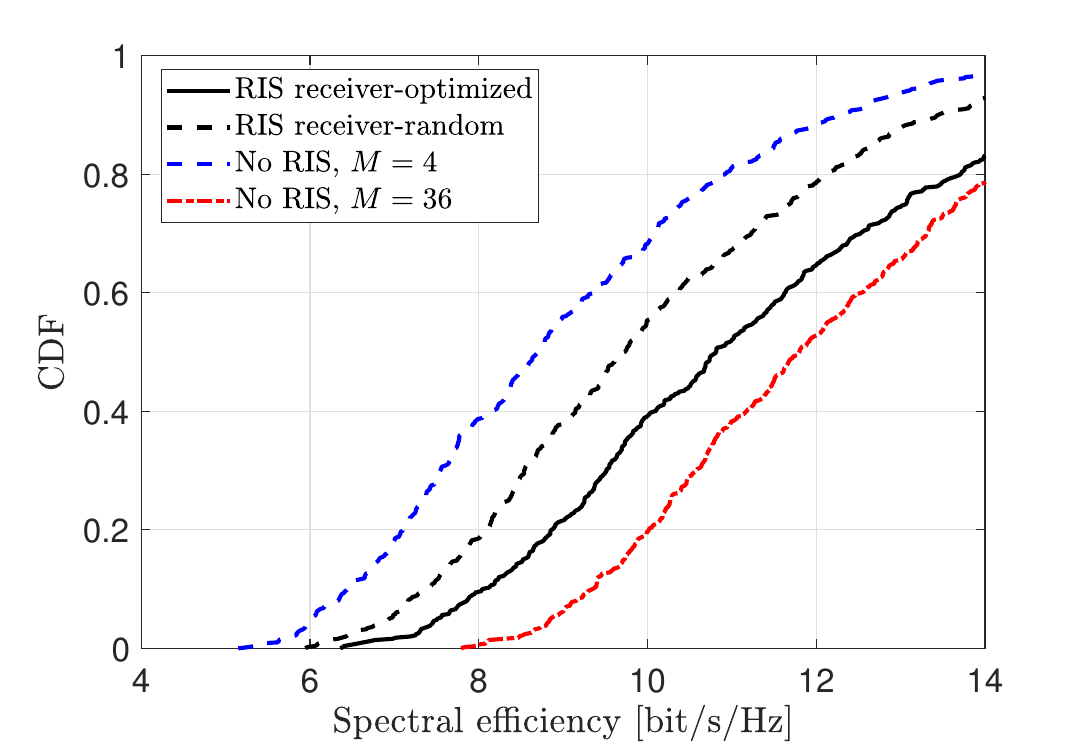}
		\vspace{-4mm}
		\caption{The CDF of the SE per UE for $L=100$ and $K=20$ in a $1 \times 1$\,km area.} \label{fig:sim2}
	\end{center}
\vspace{-8mm}
\end{figure}

In Fig.~\ref{fig:sim2}, we expand the setup to include $L=100$ APs and $K=20$ UEs within the same $1  \times 1$\,km area. A key observation is that the proposed framework's performance is somewhat diminished by the larger scale, failing to achieve the SE levels of the configuration with $M=36$ active AP antennas. Nevertheless, the proposed RIS-integrated configuration still significantly outperforms the setup that uses the same number of $M=4$ active APs without the RIS.

 We hypothesize that the diminished gains of the proposed method in the larger network are due to the more densely packed deployment compared to the initial, smaller setup of APs and UEs. To test this, we maintain a high number of APs and UEs at $L=100$ and $K=20$, respectively, but also expand the network area to $2 \times 2$\,km. Fig.~\ref{fig:sim3} displays the CDF of the SE for this configuration. The figure shows that the performance of the proposed RIS-integrated setup surpasses that of the non-RIS configuration with $M=36$ active AP antennas.

\begin{figure}[t!]
		\vspace{0.1cm}
	\begin{center}
		\includegraphics[trim={0.6cm 0cm 1cm 0.6cm},clip,width=8.1cm]{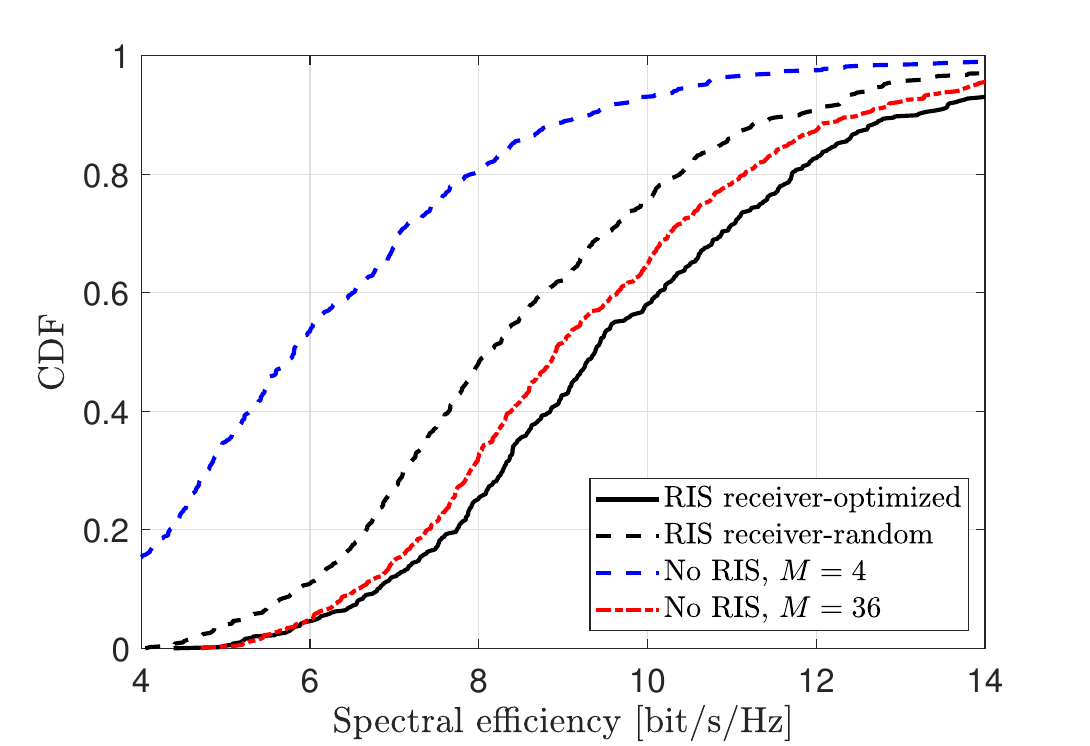}
		\vspace{-4mm}
		\caption{The CDF of the SE per UE for $L=100$ and $K=20$ in a $2 \times 2$\,km area.} \label{fig:sim3}
	\end{center}
\vspace{-8mm}
\end{figure}

\section{Conclusions}

In conclusion, this paper has introduced a novel RIS-integrated user-centric cell-free massive MIMO architecture that significantly enhances uplink SE without requiring more antennas. Our simulations have demonstrated that integrating RIS with smaller arrays of active antennas can substantially outperform traditional cell-free MIMO systems that do not contain RISs. There are even gains compared with an active array whose area matches that of the RIS. 

The analytical contribution consists of a long-term phase-shift configuration algorithm and generalization of cell-free massive MIMO methodology to the considered scenario.
In smaller-scale network setups with fewer APs and UEs, our proposed method achieved up to a 55\% improvement in median SE compared to traditional setups with the same number of active antennas but without RIS assistance. These improvements highlight the effectiveness of our phase-shift optimization algorithm, which outperforms random phase-shift configurations and supports higher SEs.

However, as network size increased, the benefits of the RIS-integrated approach were somewhat reduced, though performance remained superior to configurations without RIS technology. This suggests that while RIS-integrated antenna array technology greatly enhances performance in less dense networks, its advantage diminishes—but does not disappear—in more densely deployed networks. 
The potential reason is that the RIS provides great beamforming gains, but the spatial resolution and multiplexing capability remain limited by the number of active antennas.
Further work could explore the optimal ratio for RIS elements to active antennas in networks with different sizes and user densities.

\bibliographystyle{IEEEtran}

\bibliography{IEEEabrv,refs}

\end{document}